\newtheorem{lemma}{Lemma}
\newtheorem{theorem}{Theorem}
\DeclareMathAlphabet{\mathbbb}{U}{bbold}{m}{n}
\newcolumntype{M}[1]{>{\centering\arraybackslash}m{#1}}
\newcommand{\col}{\operatorname{col}}
\newcommand{\diag}{\operatorname{diag}}
\newcommand{\blkdiag}{\operatorname{blkdiag}}
\newcommand{\argmin}{\operatorname{argmin}}
\newcommand{\proj}{\operatorname{proj}}
\newcommand{\Norcone}{\operatorname{N}}
\begin{document}
\title{\bf \LARGE Optimal Bidding Strategies in Network-Constrained Demand Response: A Distributed Aggregative Game Theoretic Approach}

\author{Xiupeng Chen, Jacquelien M. A. Scherpen, and Nima Monshizadeh
\thanks{Xiupeng Chen, Jacquelien M. A. Scherpen, and Nima Monshizadeh are with Jan C. Willems Center for Systems and Control, ENTEG, University of Groningen, Groningen, 9747 AG, The Netherlands. (email: {\tt \{xiupeng.chen, j.m.a.scherpen, n.monshizadeh\}@rug.nl})}
}

\maketitle

\setlength\abovedisplayskip{1.5pt}
\setlength\belowdisplayskip{1.5pt}
\setlength\intextsep{1pt}

\begin{abstract}
Demand response has been a promising solution for accommodating renewable energy in power systems. In this study, we consider a demand response scheme within a distribution network facing an energy supply deficit. The utility company incentivizes load aggregators to adjust their pre-scheduled energy consumption and generation to match the supply. Each aggregator, which represents a group of prosumers, aims to maximize its revenue by bidding strategically in the demand response scheme. Since aggregators act in their own self-interest and their revenues and feasible bids influence one another, we model their competition as a network-constrained aggregative game. This model incorporates power flow constraints to prevent potential line congestion. Given that there are no coordinators and aggregators can only communicate with their neighbours, we introduce a fully distributed generalized Nash equilibrium seeking algorithm to determine the optimal bidding strategies for aggregators in this game. Within this algorithm, only estimates of the aggregate and certain auxiliary variables are communicated among neighbouring aggregators. We demonstrate the convergence of this algorithm by constructing an equivalent iteration using the forward-backward splitting technique.
\end{abstract}


\section{Introduction}
The rising integration of renewable energy resources presents challenges for operators in matching energy supply and demand cost-effectively \cite{sinsel2020challenges}. With the growing proliferation of distributed energy resources, the so-called "prosumers" can now participate in demand response programs by adjusting their flexible generation and consumption patterns. Aggregators typically serve as intermediaries between the utility company and a vast number of prosumers, addressing scalability issues \cite{gkatzikis2013role}. Given that each aggregator represents a significant portion of the demand, it acts as a price-maker, bidding strategically when offering flexibility services to the utility company. Moreover, the revenue and feasible bid of each aggregator are influenced by the bids of all other aggregators due to their physical connection within a distribution network. This work focuses on determining the optimal bidding strategies for aggregators to maximize their profits in a demand response program.

Game theory offers a powerful framework for examining competitive behaviours in demand response programs. For instance, three distinct game-theoretic billing methods were explored for demand-side management in residential communities \cite{hupez2022pricing}. A game-theoretic horizon decomposition approach for real-time demand-side management was introduced in \cite{mishra2022game}. In \cite{scarabaggio2022noncooperative}, a non-cooperative control mechanism was devised for the optimal operation of distribution networks. Meanwhile, \cite{chen2019energy} featured a novel demand bidding method within an energy sharing game. In these studies, market participants engage in a (generalized) Nash game, and the resulting (generalized) Nash equilibrium is seen as their optimal bidding strategies. At this equilibrium, every participant maximizes their profits, and none has an incentive to deviate from their chosen strategies. To identify the optimal bidding strategies, the algorithms in \cite{mishra2022game}-\cite{chen2019energy} either necessitate a central coordinator to continuously gather and disseminate strategies or operate under the assumption that each participant can communicate with every other participant. However, these assumptions might not hold in certain real-world scenarios where central coordinators are absent, and aggregators can only communicate with their immediate neighbours. Consequently, there is a pressing need to develop a \textit{fully} distributed Nash equilibrium-seeking algorithm to find optimal bidding strategies for aggregators. 

There are some literature in the control community investigating fully distributed Nash equilibrium seeking algorithms under partial information scenarios. For example, various algorithms have been proposed for Nash games without coupling constraints \cite{tatarenko2020geometric, salehisadaghiani2019distributed} and with coupling constrains \cite{pavel2019distributed,bianchi2022fast}. However, these algorithms are complicated and inefficient, since they require each player to estimate the decisions of all the other players and increase the number of variables significantly. Furthermore, each player has to share its true decision to its neighbours. For aggregative games, \cite{koshal2016distributed} firstly proposed a distributed algorithm with diminishing step sizes, where each player only estimates the aggregate value of all decisions. Another works \cite{lei2018linearly,parise2019distributed} require an increasing number of communication rounds before each decision updates. To overcome the limitations of above mentioned algorithms, a project-gradient based distributed Nash equilibrium seeking algorithm for aggregative games was proposed in \cite{gadjov2020single}. The authors of \cite{bianchi2022fast} employed a proximal-point algorithm to enhance the convergence speed.  The convergence of these two algorithms can be guaranteed only providing they are initialized properly. In addition, they both require the so-called pseudo-gradient mappings to be strongly monotone, which restricts their direct implementation in our problem. 

In this work, we examine a scenario in which there's an energy supply deficit within a distribution network. The discrepancy between supply and demand is termed the "load adjustment requirement." The utility company encourages aggregators to modify their energy consumption and generation to fulfill this requirement. The main contributions of our study are as follows:
\begin{itemize}
\item We introduce a demand response scheme for energy balancing. In this scheme, aggregators bid strategically, and the utility company clears the market to meet the load adjustment requirement, taking into account both load adjustment capacities and power flow constraints. We frame the bidding behaviors of aggregators within this scheme as a network-constrained aggregative game and characterize their optimal bidding strategies.
\item Acknowledging the absence of coordinators and the fact that aggregators can only communicate with their neighbours, we present a \textit{fully} distributed generalized Nash equilibrium algorithm to pinpoint the optimal bidding strategies for aggregators in this game. In this method, aggregators share only an estimate of the aggregate value and a few auxiliary variables with their neighbours. Furthermore, our algorithm does not assume the pseudo-gradient mapping to be strongly monotone. We validate the effectiveness of our approach with a case study.
\end{itemize}
The paper is organized as follows: Section \uppercase\expandafter{\romannumeral2} introduces the demand response scheme for energy balancing in distribution network. Section \uppercase\expandafter{\romannumeral3} formulates the aggregative game among aggregators in this scheme. In Section \uppercase\expandafter{\romannumeral4}, a fully distributed Nash equilibrium seeking algorithm is proposed, and its convergence is formally proved. The effectiveness of the algorithm is verified in  Section
\uppercase\expandafter{\romannumeral5}. The paper closes with conclusions in Section \uppercase\expandafter{\romannumeral6}.

\subsection{Notation} Let $\mathbb{R}$ and $\mathbb{R}_+$ be the sets of real numbers and nonnegtive real numbers, respectively. $\mathbb{R}^n$ and $\mathbb{R}^{n\times m}$ denote the spaces of all $n$-dimension vectors and $n\times m$ matrices with real elements. We use $\mathbbb{1}$($\mathbbb{0}$) to denote the vector/matrix with all elements equal to 1(0) and use $I$ as the identity matrix.  We include the dimension of these vectors/matrices as a subscript, whenever needed. Given a set $\mathcal{N}=\{1,2,...,N\}$, $\col(x_n)_{n\in\mathcal{N}}$ denotes the stacked vector obtained from $x_n\in \mathbb{R}^{m_n}$,  $\diag(x_n)_{n\in\mathcal{N}}$ denotes the diagonal matrix with $x_n\in \mathbb{R}$ on its diagonal, $\blkdiag(A_n)_{n\in\mathcal{N}}$ denotes the block diagonal matrix with $A_n\in\mathbb{R}^{m_n\times l_n}$ as its diagonal blocks. The maximum (minimum) element in $\col(x_n)_{n\in\mathcal{N}}$ is denoted by $\max_{n\in \mathcal{N}} x_n$($\max_{i\in \mathcal{N}} x_n$). For vectors $x,y \in \mathbb{R}^n$ and a symmetric positive definite matrix $\Phi\in \mathbb{R}^{n\times n}$, 
$\langle   x,y\rangle_\Phi=\langle \Phi x,y\rangle$ denotes the $\Phi$-induced inner product,  $\|x\|_\Phi=\sqrt{\langle \Phi x,x\rangle}$ denotes the $\Phi$-induced norm, and we drop the index $\Phi$ for the case of standard norm/inner product $\Phi=I_n$. We use $\lambda_{\min}(\Phi)$, $\lambda_{\max}(\Phi)$, and $\Phi^{-1}$ to denote the minimum eigenvalue, maximum eigenvalue and inverse matrix of $\Phi$. For a matrix $A\in \mathbb{R}^{n\times m}$, we use $\|A\|$ to denote the maximum singular value of $A$. The Kronecker product is denoted by $\otimes$ and the Cartesian product of the sets $\Omega_n$, with $n\in \mathcal{N}$, by $\prod_{n\in\mathcal{N}}\Omega_n$.

\subsection{Operator theory}
We use $\mathrm{Id}(\cdot)$ to denote the identity operator. For a closed set $\Omega\in\mathbb{R}^n$, the mapping $\proj_{\Omega}:\mathbb{R}^n\rightarrow \Omega$ denotes the projection onto $\Omega$, i.e, $\proj_{\Omega}(x)=\argmin_{y\in\Omega}\|y-x\|$. The set-valued mapping $\Norcone_{\Omega}:\mathbb{R}^n\rightarrow \mathbb{R}^n$ denotes the normal cone operator for the set $\Omega\in \mathbb{R}^n$, i.e, $\Norcone_{\Omega}(x)=\emptyset$ if $x\notin\Omega$, and $\Norcone_{\Omega}(x)=\{v\in \mathbb{R}^n\mid \mathrm{sup}_{z\in\Omega}v^{\top}(z-x)\geq 0\}$ otherwise. A mapping $F:\Omega \rightarrow\mathbb{R}^n$ is $\ell$-Lipschitz continuous, with $\ell>0$, if $\|F(x)-F(y)\|\leq\ell\|x-y\|$ for all $x,y\in\Omega$. The mapping $F$ is $\mu$-strongly monotone, with $\mu>0$, if $(F(x)-F(y))^{\top}(x-y)\geq\mu\|x-y\|^2$ for all $x,y\in\Omega$. The mapping $F$ is $\eta$-averaged, with $\eta\in(0,1)$, if $\|F(x)-F(y)\|^2\leq \|x-y\|^2-\frac{1-\eta}{\eta}\|x-F(x)-(y-F(y))\|^2$, for all $x,y\in\Omega$. The mapping $F$ is $\beta$-cocoercive, with $\beta>0$, if $\beta\|F(x)-F(y)\|^2\leq (x-y)^\top (F(x)-F(y))$, for all $x,y\in\Omega$. The variational inequality problem $\mathrm{VI} \big(\Omega,F\big)$ is to find the point $\bar{x}\in \Omega$ such that $(x-\bar{x})^\top F(\bar{x})\geq 0$ for all $x\in\Omega$. We use ``$\circ$" to denote the composition of two mappings.

\section{Problem Statement}
\label{sec::prob}
This paper considers a distribution network consisting of a utility company and a set of aggregators $\mathcal{N}:=\{1,2,...,N\}$ with index $n\in\mathcal{N}$. Each aggregator is responsible for managing a group of prosumers equipped with flexible resources, including dispatchable generators and adjustable loads. The distribution network may experience an energy supply deficit in a certain time interval due to the upstream transmission network's prediction error of renewable generation outputs. The utility company can then incentivize the aggregators to adjust their pre-scheduled energy consumption or generation to offset this deficit. It broadcasts the total load adjustment requirement (the amount of energy deficit) $r\in\mathbb{R}_+$ to aggregators and determines how much of load adjustment $\{x_n\}_{n\in \mathcal{N}}$ should be provided by aggregator $n\in\mathcal{N}$ such that the energy balancing holds, namely
 \begin{equation}\label{eq_flexibility_balance}
    \sum_{n=1}^N x_n=r.
  \end{equation}

Motivated by the energy sharing mechanism proposed in \cite{chen2019energy}, we adopt the following demand response scheme.
\begin{itemize}
    \item Each aggregator $n\in\mathcal{N}$ submits its bid
    $$\beta_n\in \Omega_n:=\{\beta_n\mid \underline{\beta}\leq \beta_n\leq \bar{\beta}\}$$
    and its load adjustment capacity $\hat{x}_n$ to the utility company, where $\Omega_n$ is the feasible set, $\underline{\beta}$ and $\bar{\beta}$ are the minimum and maximum admissible bids. The bid $\beta_n$ indicates the level of willingness of the $n$th aggregator to adjust the load of its corresponding prosumers.

    \item The utility company clears the price $p$ as
\begin{equation}\label{eq_price}
    p=\frac{r-\mathbbb{1}^{\top}\beta}{\alpha N},
\end{equation}
where $\alpha\in\mathbb{R}^+$ is a constant imposed by the utility company. Note that the clearing price penalizes the mismatch between the amount of energy deficit and the total load adjustment bids with a factor of $1/(\alpha N)$.
The clearing load adjustment $x_n$ of aggregator $n$ is 
\begin{equation} \label{eq_load_adjustment}
  x_n=\frac{r-\mathbbb{1}^{\top}\beta}{N}+\beta_n, \,\forall\, n\in\mathcal{N},  
\end{equation}
or in a compact form as
\begin{equation}\label{eq_flexibility_allocation_compact}
    x=A\beta+c,
\end{equation}
where $\beta=\col(\beta_n)_{n\in\mathcal{N}}$, $x=\col(x_n)_{n\in\mathcal{N}}$, $A= I-\frac{1}{N} \mathbbb{1}\mathbbb{1}^{\top}$ and $c=\frac{r}{N}\mathbbb{1}$. 

Note that the load adjustment in \eqref{eq_load_adjustment}, equivalently \eqref{eq_flexibility_allocation_compact}, guarantees that the balancing condition \eqref{eq_flexibility_balance} holds
for any $\beta\in\Omega:=\prod\limits_{n\in \mathcal{N}}\Omega_n$.
\item The utility company verifies if the market clearing result satisfies the load adjustment capacity constraints 
\begin{equation} \label{eq_constraint_original}
    0 \le x_n \le \hat{x}_n, \forall\, n\in\mathcal{N},  
\end{equation}
and the power flow constraints \cite{chen2022energy} 
\begin{equation} \label{eq_power_flow}
    -\hat{f}_l \le \sum_{n=1}^N \pi_{ln}(e_n-x_n) \le \hat{f}_l, \, \forall \, l\in\mathcal{L},
\end{equation}
where $\hat{f}_l\in \mathbb{R}_+$ is the line capacity of line $l\in\mathcal{L}:=\{1,2,...,H\}$, $e_n$ is the pre-scheduled net load of aggregator $n$, and $\pi_{ln}\in \mathbb{R}$ is the line flow distribution factor from aggregator $n$ to line $l$.

\item If the constraints \eqref{eq_constraint_original} and \eqref{eq_power_flow} are not satisfied, the bids should be modified to the ``closest" feasible ones, that is, the solution to
\begin{equation} \label{modification}
    \begin{aligned}
        &\min_{\beta'\in \Omega} \, ||\beta'-\beta|| \\
        &\mathrm{s.t.} \, \eqref{eq_load_adjustment}, \eqref{eq_constraint_original}\, \mathrm{and} \, \eqref{eq_power_flow} \, \mathrm{hold}.
    \end{aligned}
\end{equation} 
\end{itemize}

\section{Game formulation}
Since the aggregators are rational and self-interested, they make a strategic bid in the demand response scheme. In this section, we formulate the competition among aggregators as an aggregative game and characterize their optimal bidding strategies.

Each aggregator $n\in\mathcal{N}$, as an intermediary, aims to maximize its net revenue of participating in demand response, that is, to minimize, 
\begin{equation} \label{eq_objective_original}
    J_n(x_n,p)=C_n(x_n)-p x_n,
\end{equation}
where $C_n(x_n)=q_n(x_n)x_n$ is the payment from aggregator $n$ to its prosumers with the pricing function $q_n(\cdot)$. For simplicity, we take this pricing function as $q_n(x_n)=a_n x_n+b_n$ with $a_n,b_n>0$. Note that the aggregators need to offer a higher price to secure more load adjustments from the prosumers.  

To show the explicit effect of the bidding strategies in the net revenue $J_n(\cdot, \cdot)$, we substitute $p$ and $x_n$ from \eqref{eq_price} and \eqref{eq_load_adjustment} in \eqref{eq_objective_original} and rewrite it as
\begin{multline}\label{eq_objective}
\bar J_n(\beta_n,\beta_{-n})=C_n\big((r-\mathbbb{1}^{\top}\beta)/N+\beta_n\big) \\
-(r-\mathbbb{1}^{\top}\beta+N\beta_n)(r-\mathbbb{1}^{\top}\beta)/(\alpha N^2),   
\end{multline}
where $\beta_{-n}=\col(\beta_m)_{m\in\mathcal{N} \setminus \{n\}}$.

Similarly, the constraints \eqref{eq_constraint_original} and \eqref{eq_power_flow} can also be written explicitly as constraints on the bids $\beta$, namely as
\[
-c\leq A\beta \leq \hat x-c,
\]
and
\[
-\hat{f}+\Pi(e-c) \leq \Pi A \beta \leq \hat{f} + \Pi (e-c),
\]
respectively, where $\hat{x}=\col(\hat{x}_n)_{n\in\mathcal{N}}$, $\hat{f}=\col(\hat{f}_l)_{n\in\mathcal{L}}$, $\Pi = [\pi_{ln}]_{l\in\mathcal{L},n\in\mathcal{N}}$. The latter two constraints can be written compactly as 
\begin{equation}\label{eq_global_set}
    \tilde{A}\beta\leq d,
\end{equation}
where 
\[
\tilde A= \begin{bmatrix}
  A \\
  -A\\
  -\Pi A \\
  \Pi A
 \end{bmatrix}, \,
 d= \begin{bmatrix}
  \hat{x}-c \\
  c\\
 \hat{f}-\Pi(e-c)\\
 \hat{f} + \Pi (e-c)
 \end{bmatrix}.
\]
Next, we rewrite $d$ as a summation of $N$ vectors, that is $d=\sum_{n=1}^N d_n$ with
\[
d_n=\frac{1}{N}\begin{bmatrix}
     -c \\
     c \\
    \hat{f}+\Pi c \\
     \hat{f}-\Pi c
\end{bmatrix} +
\begin{bmatrix}
     0 \\
     \vdots \\
     \hat{x}_n \\
     \vdots \\
     0
\end{bmatrix}+
\begin{bmatrix}
     0 \\
     \vdots \\
    -\Pi [0,...,e_n,...,0]^\top \\
     \Pi [0,...,e_n,...,0]^\top
\end{bmatrix}.
\]
Note that we define each $d_n$ such that the private information of each aggregator, namely $e_n$ and $\hat{x}_n$, are separated. 

Then, each aggregator faces the following constrained optimization problem 
\begin{equation} \label{prob_consumer}
    \begin{aligned}
        &\min_{\beta_n} \quad &&\bar J_n(\beta_n, \beta_{-n}) \\
        &\mathrm{s.t.} && \beta_n \in K_n(\beta_{-n}),
    \end{aligned}
\end{equation}
where 
\[
K_n(\beta_{-n}) =\{ \beta_n \in \Omega_n \mid \tilde{A}_n \beta_n \leq d-\sum\limits_{m\neq n}^N \tilde{A}_m\beta_m\},\]
and $\tilde{A}_n$ is the $n$th column of matrix $\tilde{A}$.

This can be viewed as a game among the aggregators, which can be written compactly as the triple: 
\begin{equation}\label{eq_game}
\mathcal{G} =\{\mathcal{N},K,\col(\bar J_n(\beta_n,\beta_{-n}))_{n\in \mathcal{N}}\}, 
\end{equation}
where $K=\prod\limits_{n\in \mathcal{N}}K_n(\beta_{-n})$ is the set of admissible strategies for all aggregators.

The game  $\mathcal{G}$ is a generalized Nash game (GNG) since their objective functions and the feasible strategy sets are both coupled. It is also an aggregative game since the objective function \eqref{eq_objective} are coupled only via the aggregative value of the bids. A point $\beta^* \in K$ is a generalized Nash equilibrium (GNE) of the game, if for all $n\in\mathcal{N}$, the following holds,
\begin{equation*}
 \bar J_n(\beta_n,\beta^*_{-n})\geq \bar J_n(\beta^*_n,\beta^*_{-n}),\ \ \forall \  \beta_n \in K_n(\beta^*_{-n}).
\end{equation*}

Based on the definition of GNE, each aggregator can minimize its objective at this point and none of them would unilaterally deviate from it. Hence, the GNE can be regarded as the optimal bidding strategies of all aggregators. In this manuscript, we focus on a specific subclass of GNE, namely v-GNE \cite{kulkarni2012variational}.  Specifically, each player in the game is penalized equally for deviating from coupling constraints at the v-GNE, which also corresponds to the solution of a variational inequality problem VI($K$,$F$), where $F$ is the pseudo-gradient mapping of the game defined as 
\begin{equation} \label{eq_pgm}
    F(\beta):=\col(f_n(\beta_n,\beta_{-n}))_{n \in \mathcal{N}},
\end{equation}
where 
\begin{equation*}
\begin{aligned}
f_n(\beta_n,\beta_{-n})&:=\frac{\partial}{\partial \beta_n}\bar J_n(\beta_n,\beta_{-n}) \\=&\frac{N-1}{N} C'_n(x_n)
+\frac{(\mathbbb{1}^{\top}\beta-r)(N-2)+N \beta_n}{\alpha N^2},
\end{aligned}    
\end{equation*}
with $C'_n(x_n)$ denoting the partial derivative of $C_n$ with respect to $x_n$.\footnote{Note that $C'(x_n)$ can also be stated in terms of the bids using \eqref{eq_load_adjustment}. However, we opted not to do so for the sake of readability of the expressions.}    
Note that the set $K$ is convex and compact. Furthermore, to satisfy Slater's condition, we assume that $K$ has at least one strictly feasible point. Then, the existence of v-GNE follows from \cite[Theorem~41(a)]{scutari2014real}. 

To prepare for the algorithm design in the next section, we also introduce a local variable $\sigma_n$ for each $n\in \mathcal{N}$ and define
\begin{equation}\label{eq_g}
\begin{aligned}
\hat f_n(\beta_n,\sigma_n)&:=\frac{N-1}{N}C'_n\left(\frac{r-N\sigma_n}{N}+\beta_n\right)\\
&+\frac{(N\sigma_n-r)(N-2)+N \beta_n}{\alpha N^2},
\end{aligned}
\end{equation}
and 
\begin{equation}\label{eq_hat_F}
    \hat{F}(\beta,{\sigma}):=\col(\hat f_n(\beta_n,\sigma_n))_{n\in\mathcal{N}},
\end{equation}
where $\sigma=\col(\sigma)_{n\in\mathcal{N}}$. 
Note that  $\hat{f_n}$ can be obtained from $f_n$ by replacing $\mathbbb{1}^{\top}\beta$ by $\sigma_n N$.
Hence,  $\hat{F}(\beta,\mathbbb{1}^{\top}\beta/N)=F(\beta)$.
The variable $\sigma_n$ will serve as a local estimate of the global quantity $\mathbbb{1}^{\top}\beta/N$ for the $n$th aggregator.   

\section{Algorithm design}
\subsection{Algorithm description}
In this section, we present our proposed algorithm to find the optimal bidding strategies (v-GNE) for aggregators. Motivated by the fact that there is no coordinators, and aggregators can only communicate with their neighbours, we devise a fully distributed v-GNE seeking protocol under partial information setting. We assume that aggregators communicate locally with their neighbours via a weighted communication graph $G$. The communication graph is assumed to be connected and undirected. Each aggregator $n\in \mathcal{N}$ maintains a local estimate $\sigma_n$ of the aggregative bid $\mathbbb{1}^\top \beta/N$ and local multiplier estimates $\lambda_n\in\mathbb{R}^M$ of the multipliers of coupling constraints \eqref{eq_global_set}. Two additional auxiliary variables $\psi_n\in\mathbb{R}$ and $z_n\in\mathbb{R}^M$ are communicated to the neighbouring aggregators with the aim of reaching the consensus on the local estimates $\sigma_n$'s and local multipliers $\lambda_n$'s. The proposed distributed protocol is given in Algorithm \ref{alg1}, where for all $n\in\mathcal{N}$, the step sizes $\tau_n,\upsilon_n,\rho_n,\delta_n,\eta_n$ and the parameter $\kappa$ are all positive, $w_{nm}$ is the weight of each link $\{n, m\}$ of the communication graph.

To write the algorithm more compactly, let
\[
\psi=\col(\psi_n)_{n\in\mathcal{N}}, \sigma=\col(\sigma_n)_{n\in\mathcal{N}}, z=\col(z_n)_{n\in\mathcal{N}},
\]
\[
\lambda=\col(\lambda_n)_{n\in\mathcal{N}}, \tau=\diag(\tau_n)_{n\in\mathcal{N}}, \upsilon=\diag(\upsilon_n)_{n\in\mathcal{N}},
\]
\[
\rho=\diag(\rho_n)_{n\in\mathcal{N}}, , \delta=\blkdiag(\delta_n\otimes I_M)_{n\in\mathcal{N}}, 
\]
\[
\bar{A}=\blkdiag(\tilde A_n)_{n\in\mathcal{N}}, \eta=\blkdiag(\eta_n\otimes I_M)_{n\in\mathcal{N}}. 
\]
and $\bar{d}=\col(d_n)_{n\in\mathcal{N}}$.
Consequently, we can write the dynamics in Algorithm \ref{alg1} as
\begin{equation}\label{eq_compact}
\begin{aligned}
&\beta^{k+1}=\proj_{\Omega}[\beta^k-\tau(  \hat F(\beta^k,\sigma^k)+\bar{A}^\top\lambda^{k})]\\
&\psi^{k+1}=\psi^k+\upsilon L_{\sigma}\sigma^k\\
&\sigma^{k+1}=\sigma^k+\rho(\kappa(\beta^k-\sigma^k)-L_{\sigma}(2\psi^{k+1}-\psi^k))\\
&z^{k+1}=z^k+\delta L_{\lambda}\lambda^k\\
&\lambda^{k+1}=\proj_{\mathbb{R}_{+}^{NM}}[\lambda^k-\eta(L_{\lambda}\lambda^k+\bar{d}-\bar{A}(2\beta^{k+1}-\beta^k)\\
&+L_{\lambda}(2z^{k+1}-z^k))]
\end{aligned}
\end{equation}
where $\hat F(\cdot, \cdot)$ is given by \eqref{eq_hat_F}, and $L_{\sigma}=L$, $L_{\lambda}=L\otimes I_{M}$, with $L$ denoting the Laplacian matrix of $G$.
\vspace{5pt}
\begin{algorithm}
\caption{Fully Distributed v-GNE Seeking Algorithm} \label{alg1}

\textbf{Initialization:} For each $n\in \mathcal{N}$, set $\beta_n^0 \in \Omega_n$, $\sigma_n^0\in \mathbb{R}$, $\psi_n^0 \in \mathbb{R}$, $z_n \in \mathbb{R}^M$, $\lambda_n \in \mathbb{R}_{+}^M$.
\par\textbf{Iterate until convergence:} 
\par\textit{Communication at the $k$th step:} Each aggregator $n\in \mathcal{N}$ communicate $\sigma_n^k$, $\psi_n^k$, $z_n^k$, $\lambda_n^k$ to its neighbouring aggregators $m\in\mathcal{N}_n$.
\par\textit{Local variable update at the $k$th step:}
\small
\begin{align*}
\beta_n^{k+1}&=\proj_{\Omega_n}[\beta_n^k-\tau_n(\hat{f}_n(\beta_n^k,\sigma_n^k)+\tilde{A}_n^\top\lambda_n^k)]\\
\psi_n^{k+1}&=\psi_n^k+\upsilon_n\sum_{m\in\mathcal{N}_n}w_{nm}(\sigma_n^k-\sigma_m^k) \\
\sigma_n^{k+1}&=\sigma^k_n+\rho_n\big(\kappa(\beta^k_n-\sigma_n^k)\\
&-\sum_{m\in\mathcal{N}_n}w_{mn}(2(\psi_n^{k+1}-\psi_m^{k+1})-(\psi_n^k-\psi_m^k))\big)\\
z_n^{k+1}&=z_n^k+\delta_n\sum_{m\in\mathcal{N}_n}w_{nm}(\lambda_n^k-\lambda_m^k) \\
\lambda_n^{k+1}&=\proj_{\mathbb{R}_{+}^M}[\lambda_n^k-\eta_n\big(\sum_{m\in\mathcal{N}_n}w_{nm}(\lambda_n^k-\lambda_m^k)+d_n+\tilde{A}_n(\beta_n^k\\
&-2\beta_n^{k+1})+\sum_{m\in\mathcal{N}_n}w_{nm}(2(z_n^{k+1}-z_m^{k+1})-(z_n^k-z_m^k))\big)]
\end{align*}
\end{algorithm}

\subsection{Steady-state analysis}
Before providing the convergence analysis, we show that the steady state of the dynamics in \eqref{eq_compact} yields the v-GNE of the game. In what follows, we show that \eqref{eq_compact} can be further written as the following preconditioned forward-backward iteration,
\begin{equation}\label{eq_iteration}
\omega^{k+1}=\mathcal{V}_{\Phi}\circ\mathcal{U}_{\Phi}(\omega^k),
\end{equation}
where $\omega\in\Omega\times \mathbb{R}^{2N(1+M)}:=\col(\beta,\psi,\sigma,z,\lambda)$, $\mathcal{V}_{\Phi}:=(\mathrm{Id}+\Phi^{-1}\mathcal{B})^{-1}$, $\mathcal{U}_{\Phi}:=(\mathrm{Id}-\Phi^{-1}\mathcal{A})$. The mappings $\mathcal{A}$ and $\mathcal{B}$ and the preconditioned matrix $\Phi$ are defined as
\begin{equation}
\mathcal{A}\\:=\begin{bmatrix}
\hat F(\beta,\sigma)\\
0\\
\kappa(\sigma-\beta)\\
0\\
\bar d+L_{\lambda}\lambda
\end{bmatrix},
\mathcal{B}\\:=\begin{bmatrix}
\Norcone_{\Omega}(\beta)+\bar{A}^\top\lambda\\
-L_{\sigma}\sigma\\
L_{\sigma}\psi\\
-L_{\lambda}\lambda\\
\Norcone_{\mathbb{R}_{+}^M}(\lambda)-\bar{A}x+L_{\lambda}z
\end{bmatrix},
\end{equation}
\begin{equation}
\Phi:=\begin{bmatrix}
\tau^{-1} & 0 & 0 & 0 & -\bar{A}^\top\\
0 & \upsilon^{-1} & L_{\sigma} & 0 & 0\\
0 & L_{\sigma} & \rho^{-1} & 0 & 0\\
0 & 0 & 0 & \delta^{-1} & L_{\lambda}\\
-\bar{A} & 0 & 0 & L_{\lambda} & \eta^{-1}
\end{bmatrix}.
\end{equation}

The main result of this subsection is provided below.
\begin{theorem}\label{proposition_steady}
Assume that the matrix $\Phi$ is positive definite. Then, the dynamics \eqref{eq_compact} is equivalent to the forward-backward iteration \eqref{eq_iteration}; in particular, the steady state $\omega^*=(\beta^*,\psi^*,\sigma^*,z^*,\lambda^*)$ of \eqref{eq_compact} coincides with a fixed point of iteration \eqref{eq_iteration} and a zero of the mapping $\mathcal{A}+\mathcal{B}$. Moreover, $\beta^*$ is a v-GNE of the game $\mathcal{G}$.   \end{theorem}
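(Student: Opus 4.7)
The plan is to prove the three claims one after the other, using the forward-backward structure as the scaffolding.

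\textbf{Step 1: Equivalence of the two iterations.} First I would rewrite the forward-backward iteration
\[
\omega^{k+1}=(\mathrm{Id}+\Phi^{-1}\mathcal{B})^{-1}\circ(\mathrm{Id}-\Phi^{-1}\mathcal{A})(\omega^k)
\]
in its implicit inclusion form, namely
\[
0\in\Phi(\omega^{k+1}-\omega^k)+\mathcal{A}(\omega^k)+\mathcal{B}(\omega^{k+1}).
\]
Then, blockwise on $(\beta,\psi,\sigma,z,\lambda)$, I would read off the five inclusions using the given expressions of $\Phi$, $\mathcal{A}$ and $\mathcal{B}$. The $\beta$ and $\lambda$ blocks carry the normal cones $\Norcone_{\Omega}$ and $\Norcone_{\mathbb{R}_+^{NM}}$, which I would convert into the projections $\proj_{\Omega}$ and $\proj_{\mathbb{R}_+^{NM}}$ via the standard identity $x=\proj_{\Omega}(y)\Leftrightarrow 0\in x-y+\Norcone_{\Omega}(x)$. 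The remaining $\psi$, $\sigma$, $z$ blocks are purely linear and yield exactly the update equations in \eqref{eq_compact}. This verification is straightforward bookkeeping, the only subtle point being that the off-diagonal terms of $\Phi$ are precisely what produce the $2\psi^{k+1}-\psi^k$ and $2\beta^{k+1}-\beta^k$, $2z^{k+1}-z^k$ corrections in the updates.

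\textbf{Step 2: Steady state is a zero of $\mathcal{A}+\mathcal{B}$.} Setting $\omega^{k+1}=\omega^k=\omega^*$ in the implicit form immediately gives $0\in\mathcal{A}(\omega^*)+\mathcal{B}(\omega^*)$. Conversely, since $\Phi\succ0$ and hence invertible, any zero of $\mathcal{A}+\mathcal{B}$ satisfies $0\in\Phi^{-1}\mathcal{A}(\omega^*)+\Phi^{-1}\mathcal{B}(\omega^*)$, which by standard manipulation (add $\omega^*$ to both sides and apply the resolvent) is equivalent to $\omega^*=\mathcal{V}_{\Phi}\circ\mathcal{U}_{\Phi}(\omega^*)$. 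Hence fixed points of \eqref{eq_iteration}, zeros of $\mathcal{A}+\mathcal{B}$, and steady states of \eqref{eq_compact} all coincide.

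\textbf{Step 3: Any such $\beta^*$ is a v-GNE.} From the $\psi$- and $z$-components of $0\in\mathcal{A}(\omega^*)+\mathcal{B}(\omega^*)$ one obtains $L_{\sigma}\sigma^*=0$ and $L_{\lambda}\lambda^*=0$; since the graph $G$ is connected, this forces the consensus $\sigma^*=\bar\sigma\,\mathbbb{1}$ and $\lambda^*=\mathbbb{1}_N\otimes\lambda_0$ for some $\bar\sigma\in\mathbb{R}$ and $\lambda_0\in\mathbb{R}_+^M$. Left-multiplying the $\sigma$-block equation by $\mathbbb{1}^\top$ and using $\mathbbb{1}^\top L_\sigma=0$ gives $\kappa\,\mathbbb{1}^\top(\sigma^*-\beta^*)=0$, whence $\bar\sigma=\mathbbb{1}^\top\beta^*/N$. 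Consequently $\hat F(\beta^*,\sigma^*)=F(\beta^*)$ by the identity $\hat F(\beta,(\mathbbb{1}^\top\beta/N)\mathbbb{1})=F(\beta)$ noted after \eqref{eq_hat_F}, and $\bar A^\top\lambda^*=\tilde A^\top\lambda_0$ by definition of $\bar A$. Substituting into the $\beta$-block inclusion yields
\[
0\in F(\beta^*)+\tilde A^\top\lambda_0+\Norcone_{\Omega}(\beta^*).
\]

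\textbf{Step 4: Dual feasibility and complementarity.} The remaining task is to reduce the $\lambda$-block inclusion, which after using $L_\lambda\lambda^*=0$ becomes $0\in\bar d-\bar A\beta^*+L_\lambda z^*+\Norcone_{\mathbb{R}_+^{NM}}(\lambda^*)$. I would left-multiply by $(\mathbbb{1}_N^\top\otimes I_M)$; this annihilates $L_\lambda z^*$, sends $\bar d\mapsto d$ and $\bar A\beta^*\mapsto\tilde A\beta^*$, and maps the normal cone $\Norcone_{\mathbb{R}_+^{NM}}(\mathbbb{1}_N\otimes\lambda_0)$ into $\Norcone_{\mathbb{R}_+^M}(\lambda_0)$ (using that each block component lies in $\Norcone_{\mathbb{R}_+^M}(\lambda_0)$ by the product structure). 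This gives the primal feasibility, dual feasibility and complementarity condition $0\in d-\tilde A\beta^*+\Norcone_{\mathbb{R}_+^M}(\lambda_0)$, which together with the inclusion from Step 3 forms the KKT system of $\mathrm{VI}(K,F)$ with a consensus multiplier; this is precisely the characterization of a v-GNE of $\mathcal{G}$ per \cite{kulkarni2012variational}.

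The main obstacle I anticipate is Step 4: making the image of $\Norcone_{\mathbb{R}_+^{NM}}(\lambda^*)$ under $(\mathbbb{1}_N^\top\otimes I_M)$ rigorous and showing it lands in $\Norcone_{\mathbb{R}_+^M}(\lambda_0)$, so that one genuinely recovers the single-multiplier KKT system rather than an averaged surrogate. The other computational steps are mechanical once the implicit form of the forward-backward iteration is in hand.
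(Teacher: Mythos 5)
Your proposal is correct and follows essentially the same route as the paper's proof: blockwise expansion of the preconditioned implicit inclusion $0\in\Phi(\omega^{k+1}-\omega^k)+\mathcal{A}(\omega^k)+\mathcal{B}(\omega^{k+1})$, consensus of $\sigma^*,\lambda^*$ via the Laplacian nullspace, recovery of $\bar\sigma=\mathbbb{1}^\top\beta^*/N$ so that $\hat F(\beta^*,\sigma^*)=F(\beta^*)$, and left-multiplication of the $\lambda$-block by $\mathbbb{1}_N^\top\otimes I_M$ to obtain the single-multiplier KKT system of $\mathrm{VI}(K,F)$ (the normal-cone image issue you flag in Step 4 is unproblematic since $\Norcone_{\mathbb{R}_+^M}(\lambda_0)$ is a convex cone, so sums of block components remain in it). The only ingredient of the paper's proof you omit is the verification that $(\mathrm{Id}+\Phi^{-1}\mathcal{B})^{-1}$ is single-valued---established there by splitting $\mathcal{B}$ into a normal-cone part and a skew-symmetric linear part to get maximal monotonicity and hence a firmly nonexpansive resolvent---though your blockwise computation already yields uniqueness of $\omega^{k+1}$ in this particular instance.
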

\begin{proof}
See Appendix.
\end{proof}
\subsection{Convergence analysis}
As we observed, the steady-state of the proposed algorithm coincides with a v-GNE of the game. Next, under suitable choices of step sizes, we show that the proposed algorithm converges to this point, as desired. To this end, we first require a few technical results.

The following lemma establishes a cocoercivity property.
\begin{lemma}\label{lemma_tilde_A}
Let $\tilde{\mathcal{A}}$ be defined as
\begin{equation}\label{eq_tidle_A}
\tilde{\mathcal{A}}\\:\begin{bmatrix}
\beta \\
\sigma
\end{bmatrix} \rightarrow  \begin{bmatrix}
\hat F(\beta,\sigma)\\
\kappa (\sigma-\beta)
\end{bmatrix}.
\end{equation}
Let $\mu_n:=2a_n\frac{N-1}{N}+\frac{1}{\alpha N}$ and $\ell_n:=-2a_n\frac{N-1}{N}+\frac{N-2}{\alpha N}$, for each $n\in \mathcal{N}$. 
Assume that \footnote{The assumption is satisfied if the parameters of the cost function, namely $\{a_n\}_{n\in \mathcal{N}}$ are sufficiently uniform.} 
\begin{equation}
\sqrt{\max_{n\in\mathcal{N}} \mu_n} - \sqrt{\min_{n\in\mathcal{N}} \mu_n} \le 2\gamma,
\end{equation}
with $\gamma=\sqrt{\frac{N-1}{\alpha N}}$
and let $\kappa$ be chosen as
\begin{equation} \label{eq_k}
\begin{aligned}
    \kappa \in \left(\sqrt{\max_{n\in\mathcal{N}} \mu_n}-\gamma, \sqrt{\min_{n\in\mathcal{N}} \mu_n}+\gamma\right).
\end{aligned}
\end{equation}
Then, the mapping $\tilde{\mathcal{A}}$ is $\tilde\epsilon$-cocoercive with $\tilde \epsilon=\frac{\min_{n\in\mathcal{N}}\bar\epsilon_n}{\max_{n\in\mathcal{N}}\underline \epsilon_n}$, where
\begin{equation*}
 \begin{aligned}
\bar\epsilon_n&=-\sqrt{(\mu_n-\kappa)^2+(\ell_n-\kappa)^2}+\kappa+\mu_n,\\
   \underline \epsilon_n &=  \mu^2_n+\ell^2_n+2\kappa^2\\ &\quad + \sqrt{(\mu_n+\ell_n)^2(\mu_n-\ell_n)^2+4(\kappa^2-\mu_n\ell_n)^2}.
\end{aligned}   
\end{equation*}
\end{lemma}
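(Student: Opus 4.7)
My plan is to exploit that $\tilde{\mathcal A}$ is a \emph{linear} mapping in $(\beta,\sigma)$ and to reduce cocoercivity to a blockwise spectral computation. Because each $\hat f_n$ in \eqref{eq_g} depends only on $(\beta_n,\sigma_n)$ and $C_n'(x_n)=2a_n x_n+b_n$ is affine, a permutation that groups the coordinates $(\beta_n,\sigma_n)$ reveals $\tilde{\mathcal A}$ as the block-diagonal linear operator $\blkdiag(J_n)_{n\in\mathcal N}$ with
\[
J_n=\begin{bmatrix}\mu_n & \ell_n\\ -\kappa & \kappa\end{bmatrix}.
\]
The entries of the first row follow by direct differentiation of \eqref{eq_g}, while those of the second row come from $\kappa(\sigma_n-\beta_n)$ in \eqref{eq_tidle_A}. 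Since cocoercivity of a block-diagonal linear map holds with constant equal to the minimum of the blockwise constants, it suffices to analyse a single $J_n$.

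Next I will invoke the standard implication that a linear operator $T$ is $\bigl(\lambda_{\min}(\operatorname{sym}T)/\|T\|^2\bigr)$-cocoercive, where $\operatorname{sym}T=\tfrac12(T+T^\top)$: combining strong monotonicity $\langle x,Tx\rangle\ge\lambda_{\min}(\operatorname{sym}T)\|x\|^2$ with the Lipschitz bound $\|Tx\|\le\|T\|\|x\|$, and using linearity to replace $x$ by a difference, yields the desired inequality. This reduces the task to computing $\lambda_{\min}(\operatorname{sym}J_n)$ and $\lambda_{\max}(J_n^\top J_n)$. The symmetric part has trace $\mu_n+\kappa$ and determinant $\mu_n\kappa-(\ell_n-\kappa)^2/4$, so its smallest eigenvalue is $\tfrac12\bigl(\mu_n+\kappa-\sqrt{(\mu_n-\kappa)^2+(\ell_n-\kappa)^2}\bigr)=\bar\epsilon_n/2$. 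For $J_n^\top J_n$ I compute trace $\mu_n^2+\ell_n^2+2\kappa^2$ and determinant $\kappa^2(\mu_n+\ell_n)^2$; the identity $(\mu_n^2+\ell_n^2)^2-4\mu_n^2\ell_n^2=(\mu_n+\ell_n)^2(\mu_n-\ell_n)^2$ collapses the discriminant and produces $\lambda_{\max}(J_n^\top J_n)=\underline\epsilon_n/2$. Hence $J_n$ is $\bar\epsilon_n/\underline\epsilon_n$-cocoercive, and bounding $\underline\epsilon_n$ above by $\max_n\underline\epsilon_n$ delivers the stated constant $\tilde\epsilon$ for $\tilde{\mathcal A}$.

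What remains is to confirm $\bar\epsilon_n>0$ under \eqref{eq_k}. Squaring $\mu_n+\kappa>\sqrt{(\mu_n-\kappa)^2+(\ell_n-\kappa)^2}$ reduces it to $4\mu_n\kappa>(\ell_n-\kappa)^2$. The key observation is that $\mu_n+\ell_n=(N-1)/(\alpha N)=\gamma^2$ independently of the cost parameter $a_n$, a structural feature of the pricing scheme in \eqref{eq_price}--\eqref{eq_load_adjustment}. Substituting $\ell_n=\gamma^2-\mu_n$ and factoring the resulting quadratic in $\kappa$ yields a bracketing condition in terms of $\sqrt{\mu_n}$ and $\gamma$; enforcing it uniformly over $n$ gives precisely the interval \eqref{eq_k}, whose non-emptiness is exactly the footnoted assumption $\sqrt{\max_n\mu_n}-\sqrt{\min_n\mu_n}\le 2\gamma$.

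\textbf{Main obstacle.} The principal technical work lies in the two $2{\times}2$ spectral computations and in spotting the algebraic identity that simplifies $\lambda_{\max}(J_n^\top J_n)$. Conceptually, the subtle point is that a single admissible interval of $\kappa$ must serve every $n\in\mathcal N$ simultaneously; this is made possible by the $n$-independent identity $\mu_n+\ell_n=\gamma^2$, which is what ties the footnoted assumption directly to the non-emptiness of \eqref{eq_k}.
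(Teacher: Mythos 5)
Your proposal follows essentially the same route as the paper's proof: both reduce cocoercivity of the (affine) map $\tilde{\mathcal{A}}$ to a blockwise spectral computation on the $2\times 2$ blocks $R_n=\bigl[\begin{smallmatrix}\mu_n & \ell_n\\ -\kappa & \kappa\end{smallmatrix}\bigr]$, identify $\lambda_{\min}(R_n+R_n^\top)=\bar\epsilon_n$ and $\lambda_{\max}(R_n^\top R_n)=\underline\epsilon_n/2$, and combine the strong-monotonicity lower bound with the operator-norm upper bound to obtain $\tilde\epsilon$, with your identity $\mu_n+\ell_n=\gamma^2$ being exactly how the paper's per-$n$ admissibility condition $\kappa\in\bigl((\sqrt{\mu_n}-\gamma)^2,(\sqrt{\mu_n}+\gamma)^2\bigr)$ arises. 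The one caveat, which you share with the paper rather than introduce, is that intersecting these per-$n$ intervals yields squared endpoints, so claiming this "gives precisely" the unsquared interval \eqref{eq_k} is an overstatement; otherwise the argument is correct and matches the published proof.
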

\begin{proof}
See Appendix.
\end{proof}

Next, we show the mapping $\mathcal{V}_{\Phi}\circ\mathcal{U}_{\Phi}$ is averaged if the step sizes are chosen small enough.

\begin{lemma}\label{lemma_positive2}
The forward-backward iteration in \eqref{eq_iteration}, is $\theta$-averaged, with $\theta=\frac{1}{2-1/(2\xi)}\in(0,1)$, if $\kappa$ satisfies \eqref{eq_k} and for all $n\in\mathcal{N}$, 
\begin{equation}\label{eq_stepsize1}
 \tau_n<2\epsilon,\, \upsilon_n<2\epsilon. \, \delta_n > 2\epsilon,   
\end{equation}
\begin{equation}\label{eq_stepsize2}
 \rho_n^{-1}>\lambda_{\max}^2(L)\left(\frac{1}{\max_{n\in\mathcal{N}}\upsilon_n}-\frac{1}{2\epsilon}\right)^{-1}+\frac{1}{2\epsilon},   
\end{equation}
\begin{equation}\label{eq_stepsize3}
 \begin{aligned}
    \eta_n^{-1}&>\|\bar{A}\|^2\left(\frac{1}{\max_{n\in\mathcal{N}}{\tau_n}}-\frac{1}{2\epsilon}\right)^{-1}\\&+\lambda_{\max}^2(L)\left(\frac{1}{\max_{n\in\mathcal{N}}{\delta_n}}-\frac{1}{2\epsilon}\right)^{-1}+\frac{1}{2\epsilon},
\end{aligned}   
\end{equation}
where $\xi=\frac{\epsilon}{\lambda_{\max}(\Phi^{-1})}$,  $\epsilon=\min\{\tilde\epsilon,1/\lambda_{\max}(L)\}$ and $\tilde\epsilon$ is given by Lemma \ref{lemma_tilde_A}. 
\end{lemma}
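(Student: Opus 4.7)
The plan is to recognize iteration \eqref{eq_iteration} as a preconditioned forward--backward splitting applied to the monotone inclusion $0\in(\mathcal{A}+\mathcal{B})\omega$, and to analyze it in the inner product induced by $\Phi$. In that metric the forward step $\mathcal{U}_{\Phi}=\mathrm{Id}-\Phi^{-1}\mathcal{A}$ and the backward step $\mathcal{V}_{\Phi}=(\mathrm{Id}+\Phi^{-1}\mathcal{B})^{-1}$ are, respectively, an averaged gradient-type map and a resolvent, to which the standard averagedness calculus applies. The three ingredients I must check are: that $\mathcal{B}$ is maximally monotone, that $\mathcal{A}$ is globally $\epsilon$-cocoercive, and that the preconditioner satisfies $\Phi\succ\frac{1}{2\epsilon}I$.

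For $\mathcal{B}$, the normal-cone blocks $\Norcone_{\Omega}$ and $\Norcone_{\mathbb{R}_{+}^{NM}}$ are maximally monotone, while the remaining linear part couples $(\psi,\sigma)$ via $\pm L_{\sigma}$, $(z,\lambda)$ via $\pm L_{\lambda}$, and $(\beta,\lambda)$ via $\bar A^{\top}/-\bar A$, giving a globally defined skew-symmetric (hence monotone) operator; summing with normal cones of full-domain regularity preserves maximal monotonicity. For $\mathcal{A}$, the $\beta$- and $\sigma$-blocks coincide with the mapping $\tilde{\mathcal{A}}$ of Lemma~\ref{lemma_tilde_A} and are therefore $\tilde\epsilon$-cocoercive under \eqref{eq_k}, while the only other nontrivial block $L_{\lambda}\lambda$ is $1/\lambda_{\max}(L)$-cocoercive because $L_{\lambda}=L\otimes I_{M}$ is symmetric positive semidefinite. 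Since these two cocoercive actions live on disjoint coordinates, the overall cocoercivity constant is $\epsilon=\min\{\tilde\epsilon,1/\lambda_{\max}(L)\}$, exactly as stated.

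The main obstacle is the quantitative bound $\Phi\succ\frac{1}{2\epsilon}I$. The idea is to use the bounds \eqref{eq_stepsize1} to make every diagonal block of $\Phi-\frac{1}{2\epsilon}I$ strictly positive, and then to eliminate the off-diagonal couplings one at a time by Schur complements. Eliminating the $\beta$-coordinate subtracts $\bar A(\tau^{-1}-\frac{1}{2\epsilon}I)^{-1}\bar A^{\top}$ from the $\lambda$-block; eliminating $\psi$ subtracts $L_{\sigma}(\upsilon^{-1}-\frac{1}{2\epsilon}I)^{-1}L_{\sigma}$ from the $\sigma$-block; eliminating $z$ subtracts $L_{\lambda}(\delta^{-1}-\frac{1}{2\epsilon}I)^{-1}L_{\lambda}$ from the $\lambda$-block. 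Using $\|\bar A\|$ and $\lambda_{\max}(L)$ as upper bounds for these corrections, the requirement that the final Schur complements in $\rho^{-1}$ and $\eta^{-1}$ still exceed $\frac{1}{2\epsilon}I$ reduces exactly to \eqref{eq_stepsize2} and \eqref{eq_stepsize3}. Care is needed in the order of elimination so that each intermediate Schur complement retains positivity, and in verifying that the uniform bounds $\max_{n}\tau_{n}$, $\max_{n}\upsilon_{n}$, $\max_{n}\delta_{n}$ suffice on the diagonal correction terms.

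To conclude, I would invoke the averaged-operator calculus in the $\Phi$-metric. From $\Phi\succ\frac{1}{2\epsilon}I$ one gets $\xi=\epsilon/\lambda_{\max}(\Phi^{-1})>\tfrac{1}{2}$, and a short computation shows that $\epsilon$-cocoercivity of $\mathcal{A}$ in the Euclidean inner product transfers to $\xi$-cocoercivity of $\Phi^{-1}\mathcal{A}$ in the $\Phi$-inner product; therefore $\mathcal{U}_{\Phi}$ is $\tfrac{1}{2\xi}$-averaged in $\|\cdot\|_{\Phi}$. Because $\Phi^{-1}\mathcal{B}$ is maximally monotone in $\|\cdot\|_{\Phi}$, its resolvent $\mathcal{V}_{\Phi}$ is $\tfrac{1}{2}$-averaged in the same metric. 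Applying the composition rule for averaged operators with constants $\tfrac{1}{2\xi}$ and $\tfrac{1}{2}$ then yields that $\mathcal{V}_{\Phi}\circ\mathcal{U}_{\Phi}$ is $\theta$-averaged with $\theta=\frac{1}{2-1/(2\xi)}\in(0,1)$, which completes the proof.
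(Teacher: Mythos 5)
Your proposal is correct and follows essentially the same route as the paper's proof: establish $\epsilon$-cocoercivity of $\mathcal{A}$ by combining Lemma~\ref{lemma_tilde_A} with the Baillon--Haddad cocoercivity of $L_{\lambda}$, transfer this to $\xi$-cocoercivity of $\Phi^{-1}\mathcal{A}$ in the $\Phi$-metric, reduce $\xi>\tfrac12$ to positive definiteness of $\Phi-\frac{1}{2\epsilon}I$ via Schur complements (which yields exactly \eqref{eq_stepsize1}--\eqref{eq_stepsize3}), and conclude with the composition rule for averaged operators. The only organizational difference is that you verify maximal monotonicity of $\mathcal{B}$ here, whereas the paper places that argument in the proof of Theorem~\ref{proposition_steady}.
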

\begin{proof}
See Appendix.
\end{proof}

Note that, by the proof of Lemma \ref{lemma_positive2}, the conditions \eqref{eq_stepsize1}, \eqref{eq_stepsize2}, \eqref{eq_stepsize3}, guarantee positive-definiteness of $\Phi$ that was assumed in Theorem \ref{proposition_steady}.
Now, we are ready to state the main result concerning the convergence of the algorithm to the v-GNE of the game.
\begin{theorem}\label{lemma_convegence}
Suppose  $\kappa$ satisfies \eqref{eq_k} and let the step sizes be chosen as in Lemma \ref{lemma_positive2}. Then, the solutions of the algorithm \ref{alg1} converge to the zero of the mapping $\mathcal{A}+\mathcal{B}$; in particular, $\beta$ converges to $\beta^*$, the v-GNE of game $G$.
\end{theorem}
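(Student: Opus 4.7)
The plan is to package the algorithm into the forward–backward iteration provided by Theorem \ref{proposition_steady}, verify that the iteration map is a contraction-like averaged operator in a suitable metric by Lemma \ref{lemma_positive2}, and then invoke a standard fixed-point convergence result for averaged operators. Throughout, the right ambient Hilbert space will not be equipped with the standard inner product but with the $\Phi$-induced one, since the preconditioned forward–backward splitting is naturally a non-expansive-type iteration only in that geometry. This is consistent with the positive-definiteness of $\Phi$ guaranteed by the step-size conditions in Lemma \ref{lemma_positive2}.

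First I would translate the dynamics \eqref{eq_compact} into the compact iteration $\omega^{k+1}=\mathcal{V}_{\Phi}\circ\mathcal{U}_{\Phi}(\omega^{k})$ using Theorem \ref{proposition_steady}; the equivalence and the identification of fixed points with zeros of $\mathcal{A}+\mathcal{B}$ (and hence with the v-GNE of $\mathcal{G}$) are already established there. Existence of such a fixed point $\omega^{\ast}$ is inherited from the existence of a v-GNE guaranteed in Section III under Slater's condition. Next, I would apply Lemma \ref{lemma_positive2} to conclude that $T:=\mathcal{V}_{\Phi}\circ\mathcal{U}_{\Phi}$ is $\theta$-averaged with $\theta\in(0,1)$ in the $\Phi$-induced norm.

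At this point the argument reduces to the classical Krasnosel'ski\u{\i}--Mann / Opial convergence theorem for averaged operators: if $T$ is $\theta$-averaged with $\theta\in(0,1)$, its fixed-point set $\mathrm{Fix}(T)$ is nonempty, and $\omega^{k+1}=T(\omega^{k})$, then $\omega^{k}$ converges to an element of $\mathrm{Fix}(T)$. Since we already know $\mathrm{Fix}(T)\neq\emptyset$, the iterates $\omega^{k}=(\beta^{k},\psi^{k},\sigma^{k},z^{k},\lambda^{k})$ converge to some $\omega^{\ast}=(\beta^{\ast},\psi^{\ast},\sigma^{\ast},z^{\ast},\lambda^{\ast})$ with $\omega^{\ast}\in\mathrm{zer}(\mathcal{A}+\mathcal{B})$. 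Projecting onto the $\beta$-coordinate and invoking Theorem \ref{proposition_steady} once more identifies the limit $\beta^{\ast}$ as the v-GNE of $\mathcal{G}$, completing the proof. Finally, because $\Phi$ is positive definite by the same step-size choices, convergence in $\|\cdot\|_{\Phi}$ is equivalent to convergence in the standard Euclidean norm, so no additional care is needed.

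I do not anticipate a substantive technical obstacle here: Theorem \ref{proposition_steady} and Lemmas \ref{lemma_tilde_A} and \ref{lemma_positive2} have done essentially all the work, and the remaining argument is a one-line appeal to the convergence theorem for averaged operators. The only point that warrants mild care is to state explicitly that the averagedness and all inner-product based properties are taken in $\langle\cdot,\cdot\rangle_{\Phi}$, and to observe that the step-size inequalities \eqref{eq_stepsize1}--\eqref{eq_stepsize3} simultaneously (i) make $\Phi\succ 0$ (so the metric is well defined) and (ii) supply the averagedness coefficient $\theta<1$ needed to apply the fixed-point theorem. With these two observations in place, the theorem follows directly.
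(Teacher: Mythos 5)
Your proposal is correct and follows essentially the same route as the paper's proof: invoke Theorem~\ref{proposition_steady} to recast Algorithm~\ref{alg1} as the iteration \eqref{eq_iteration}, use Lemma~\ref{lemma_positive2} to establish $\theta$-averagedness with $\theta\in(0,1)$, and conclude convergence to a zero of $\mathcal{A}+\mathcal{B}$ (hence to the v-GNE) via the standard fixed-point theorem for averaged operators. Your added remarks on nonemptiness of the fixed-point set and on the equivalence of the $\Phi$-induced and Euclidean norms are sensible elaborations of points the paper leaves implicit, not a different argument.
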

\begin{proof}
 It follows by Theorem \ref{proposition_steady} that Algorithm \ref{alg1} corresponds to the iteration \eqref{eq_iteration} of the mapping $\mathcal{V}_{\Phi}\circ\mathcal{U}_{\Phi}$. By Lemma \ref{lemma_positive2}, this mapping is $\theta$-averaged, with $\theta \in(0,1)$. Then the sequence generated by the iteration \eqref{eq_iteration} converges to $\omega^*=(\beta^*,\psi^*,\sigma^*,z^*,\lambda^*)$, i.e, the zero of the mapping $\mathcal{A}+\mathcal{B}$ by \cite[Proposition 5.15(iii)]{bauschke2011convex}. In particular, $\beta^*$ is the v-GNE of game $\mathcal{G}$ \eqref{eq_game}. 
\end{proof}
\section{Case study}
We perform the numerical study on the modified IEEE 33 bus distribution network with five areas, as shown in Fig.~\ref{fig:case}. The prosumers of each area is managed by an aggregator; see the red numbers $1$ to $5$ in the figure. The slack bus 1 connects the distribution network to the upstream network and experiences an energy supply deficit. The five aggregators are physically connected by four solid lines $(3,19)$, $(4,5)$, $(7,26)$ and $(9,10)$. Note that aggregator $2$ can produce energy with the dispatchable generator. They also can communicate with their neighbours through dash lines with different weights. We choose parameters $r=600\mathrm{kWh}$, $\alpha=1$, $\beta_{\min}=0$, $\beta_{\max}=150$ and the power flow limits for the four lines are $ \hat f = [1.40,6.0,2.0,2.0]\times 1000 \mathrm{kWh}$. The  aggregators' data is shown in Table ~\ref{table_data}.

\begin{figure}[ht]
\begin{center}
\includegraphics[width=0.48\textwidth]{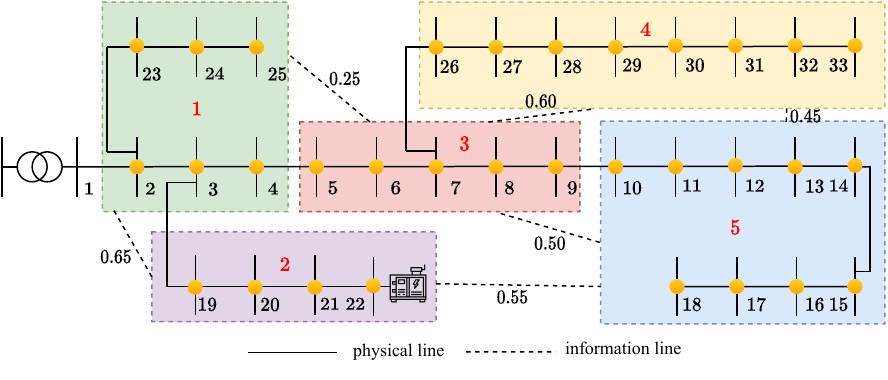}
\end{center}
\caption{Physical network and communication network among aggragators}
\label{fig:case}
\end{figure}

\begin{table}[ht]
\caption{Simulation data}
\vspace{-8pt}
\begin{center}
\begin{tabular}{ M{1cm} M{1.6cm} M{1.2cm} M{1.2cm} M{1cm} }
\toprule
Aggregator & $a_n (\$/(\mathrm{kWh})^2)$ & $b_n (\$/\mathrm{kWh})$ & $e_n(\mathrm{kWh})$ & $\hat{x}_n(\mathrm{kWh})$\\ 
\midrule
 1 & 0.0050 & 0.40 & 1250 & 250\\
2 & 0.0065 & 0.38 & -1300 & 200 \\
3 & 0.0085 & 0.36 & 1050 & 250 \\
 4 & 0.0070 & 0.37 & 1700 & 110 \\
 5 & 0.0095 & 0.80 & 1480 & 220 \\
\bottomrule
\end{tabular}
\end{center}

\label{table_data}
\end{table}

Fig.~\ref{fig:beta} depicts the evolution of bids $\beta$ and the estimates $\sigma$ of the aggregative bid $\mathbbb{1}^{\top}\beta/N$ in Algorithm \ref{alg1}. It can be seen that both the bids and the estimates can convergence in 600 iterations. Furthermore, each aggregator can finally estimate the true aggregative bid. We also show the evolution of multiplier estimates $\lambda$ and load adjustment $x$ in Fig.~\ref{fig:x}. The local multiplier estimates also convergence to the same values for each coupling constraint. The two groups of positive multipliers means there are two constraints affecting the optimal bidding strategies of aggregators. Based on the simulation setting, the abilities of aggregator $2$ and $3$ to provide load adjustment are restricted to the capacity of line $(3,19)$ and aggregator $3$'s load adjustment capacity, respectively.

\begin{figure}[ht]
\begin{center}
\includegraphics[width=0.48\textwidth]{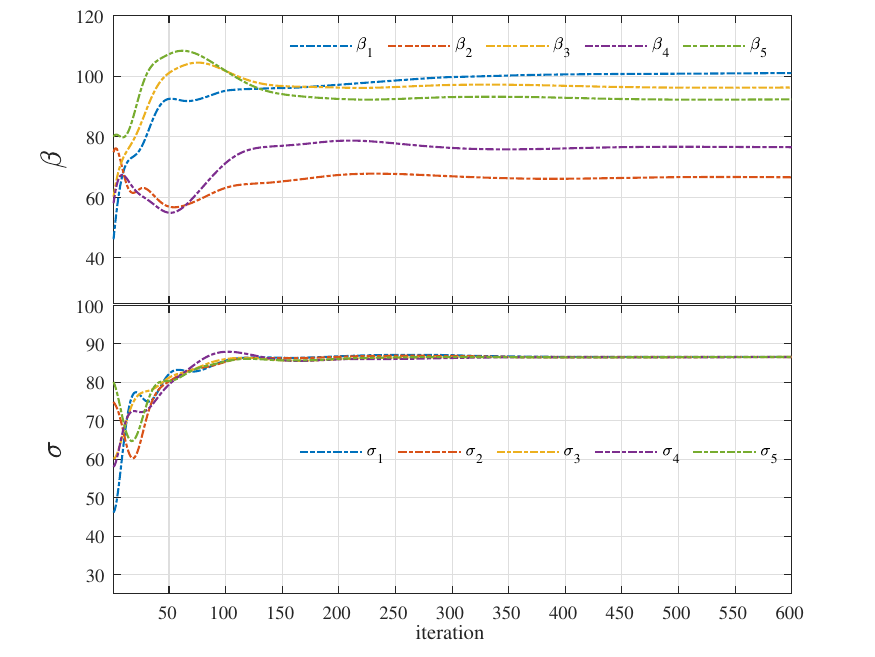}
\end{center}
\vspace{-10pt}
\caption{The evolution of $\beta$ and $\sigma$}
\label{fig:beta}
\end{figure}

\begin{figure}[ht]
\begin{center}
\includegraphics[width=0.48\textwidth]{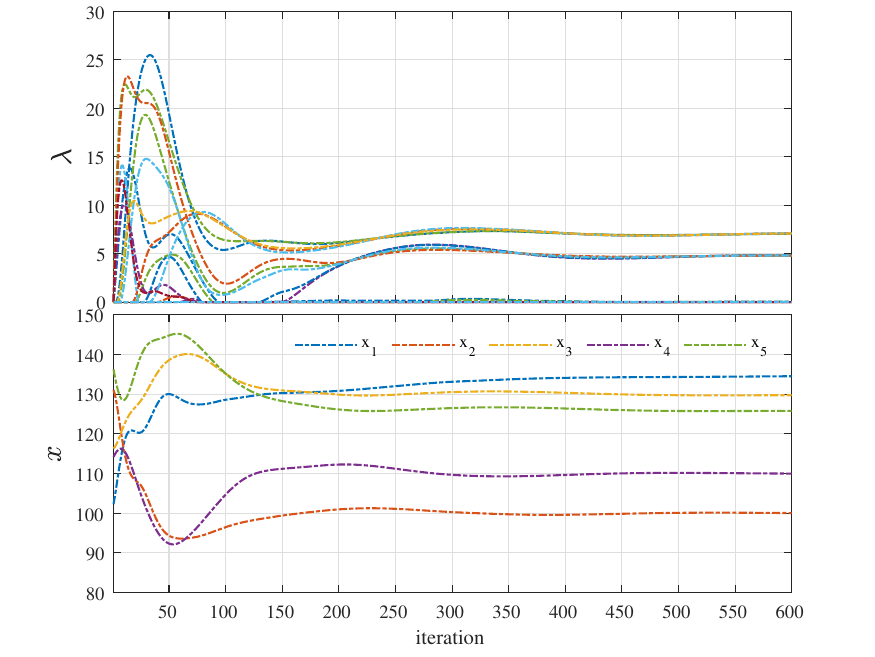}
\end{center}
\vspace{-10pt}
\caption{The evolution of $\lambda$ and $x$}
\label{fig:x}
\end{figure}

\section{Conclusion}
Aggregators can bid strategically in the demand response scheme to meet the load adjustment requirement. The bidding behaviours among them can be modeled as a network-constrained aggregative game. To find the optimal bidding strategies in a partial information setting, we propose a fully distributed Nash equilibrium seeking algorithm and give the upper bounds of the fixed step sizes. Numerical study show the effectiveness of this algorithm. Considering general cost functions and more precise physical network models are of interest for future
research. 

\bibliographystyle{IEEEtran} 
\bibliography{ref}

\appendix

\begin{proof}[Proof of Theorem \ref{proposition_steady}]
The techniques in the proof are inspired by \cite{gadjov2020single}.
Firstly, we show the dynamics \eqref{eq_compact} is equivalent to
\begin{equation}\label{eq_iteration1}
0 \in \mathcal{A}(\omega^k)+\mathcal{B}(\omega^{k+1})+\Phi(\omega^{k+1}-\omega^k).
\end{equation}
To do it, we plug the expressions of $\mathcal{A}$,  $\mathcal{B}$ and $\Phi$ into \eqref{eq_iteration1} to obtain
\begin{align*}
0 & \in \begin{bmatrix}
F(\beta^k,\sigma^k)\\
0\\
\kappa(\sigma^k-\beta^k)\\
0\\
\hat{b}+L_{\lambda}\lambda^k
\end{bmatrix}+
\begin{bmatrix}
\Norcone_{\Omega}(\beta^{k+1})+\bar{A}^\top\lambda^{k+1}\\
-L_{\sigma}\sigma^{k+1}\\
L_{\sigma}\psi^{k+1}\\
-L_{\lambda}\lambda^{k+1}\\
\Norcone_{\mathbb{R}_{+}^{NM}}(\lambda^{k+1})-\bar{A}\beta^{k+1}+L_{\lambda}z^{k+1}
\end{bmatrix}  \\
& + \begin{bmatrix}
    \tau^{-1}(\beta^{k+1}-\beta^k)-\bar{A}^\top(\lambda^{k+1}-\lambda^k) \\ v^{-1}(\psi^{k+1}-\psi^k)+L_{\sigma}(\sigma^{k+1}-\sigma^k) \\
    L_{\sigma}(\psi^{k+1}-\psi^k) +v^{-1}(\sigma^{k+1}-\sigma^k) \\
    \delta^{-1}(z^{k+1}-z^k)+L_{\lambda}(\lambda^{k+1}-\lambda^k) \\
    -\bar{A}^\top(\beta^{k+1}-\beta^k)+L_{\lambda}(z^{k+1}-z^k)+\eta^{-1} (\lambda^{k+1}-\lambda^k)
\end{bmatrix}.
\end{align*}

By reformulating the second, third and fourth inclusion as
\begin{align*}
v^{-1}(\psi^{k+1}-\psi^k)-L_{\sigma}\sigma^k=0,  
\end{align*}
\begin{align*}
    \kappa(\sigma^k-\beta^k)- L_{\sigma}\psi^k +v^{-1}(\sigma^{k+1}-\sigma^k)=0,
\end{align*}
\begin{align*}
\delta^{-1}(z^{k+1}-z^k)-L_{\lambda}\lambda^k=0,  
\end{align*}
It is easily observed that they are equivalent to the second, third and fourth equality in \eqref{eq_compact}.

The first inclusion can be written as
\begin{align*}
    0\in F(\beta^k,\sigma^k)+\Norcone_{\Omega}(\beta^{k+1})+\tau^{-1}(\beta^{k+1}-\beta^k)+\bar{A}^\top\lambda^k,
\end{align*}
and thus as
\begin{align*}
\beta^k-\tau(F(\beta^k,\sigma^k)+\bar{A}^\top\lambda^k) \in \tau\Norcone_{\Omega}(\beta^{k+1})+\beta^{k+1}.
\end{align*}
Then, the first inequality in \eqref{eq_compact} can be obtained by using $\proj_{\Omega}=(\mathrm{Id}+\Norcone_{\Omega})^{-1}$ (Example 23.4 in Reference \cite{bauschke2011convex}) and  ${\tau}^{-1}\Norcone_{\Omega}(\beta)=\Norcone_{\Omega}(\beta)$. 
We note that the latter holds since for any  $y\in \Norcone_{\Omega}(\beta)$ and $\tau_n>0$, we have $\mathrm{sup}_{z\in\Omega}\sum_{n=1}^N \tau^{-1}_n y_n(z_n-\beta_n)\geq 0$, i.e., $\tau^{-1}y\in \Norcone_{\Omega}(\beta)$.

Similarly, the last inclusion is written as
\begin{align*}
     0&\in \hat{b}+L_{\lambda}\lambda^k+\Norcone_{\mathbb{R}_{+}^{NM}}(\lambda^{k+1})-\bar{A}\beta^{k+1}+L_{\lambda}z^{k+1} \\
     -& \bar{A}^\top(\beta^{k+1}-\beta^k)+L_{\lambda}(z^{k+1}-z^k)+\eta^{-1} (\lambda^{k+1}-\lambda^k),
\end{align*}
 and thus as
\begin{align*}
     &-\bar{d}-L_{\lambda}\lambda^k-\bar{A}(2\beta^{k+1}-\beta^k)+L_{\lambda}(2z^{k+1}-z^k)+\eta^{-1} \lambda^k \\
     &\in  \Norcone_{\mathbb{R}_{+}^{NM}}(\lambda^{k+1})+\eta^{-1} \lambda^{k+1}
\end{align*}
which gives the last equality in \eqref{eq_compact}.

Since $\Phi$ is positive definite, \eqref{eq_iteration1} can be written as $(\mathrm{Id}-\Phi^{-1}\mathcal{A})({\omega}^k) \in (\mathrm{Id}+\Phi^{-1}\mathcal{B})({\omega}^{k+1})$. 

Next, we show that $(\mathrm{Id}+\Phi^{-1}\mathcal{B})^{-1}$ is singled-valued. The mapping $\mathcal{B}$ can be split further as
\begin{equation}
\begin{aligned}
\mathcal{B}:=&\mathcal{B}_1+\mathcal{B}_2:=\begin{bmatrix}
\Norcone_{\Omega}(\beta)\\
0\\
0\\
0\\
\Norcone_{\mathbb{R}_{+}^{NM}}(\lambda)
\end{bmatrix}\\
&+
\begin{bmatrix}
0&0&0&0&\bar{A}^\top\\
0&0&-L_{\sigma}&0&0\\
0&L_{\sigma}&0&0&0\\
0&0&0&0&-L_{\lambda}\\
-\bar{A}&0&0&L_{\lambda}&0
\end{bmatrix}
\begin{bmatrix}
\beta\\ \psi\\ \sigma \\ z \\ \lambda
\end{bmatrix}.
\end{aligned}
\end{equation}
The mappings $\mathcal{B}_1$ is maximally monotone since normal cones of closed convex sets are maximally monotone (Example 20.41 in Reference \cite{bauschke2011convex}), and the concatenation preserves  maximality (Proposition 20.23 in Reference \cite{bauschke2011convex}). We also deduce from Example 20.30 in Reference \cite{bauschke2011convex} that the mapping $\mathcal{B}_2$ is maximally monotone since it is linear and skew-symmetric, i.e, $\mathcal{B}_2^{\top}=-\mathcal{B}_2$. Then, the maximal monotonicity of $\mathcal{B}$ follows from Corollary 24.4 in Reference \cite{bauschke2011convex}. Since $\Phi$ is positive definite and $\mathcal{B}$ is maximally monotone, we obtain from Lemma 3.7 in Reference \cite{combettes2014variable} that the mapping $\Phi^{-1}\mathcal{B}$ is maximally monotone in the $\Phi$-induced norm, which further implies that  $\mathcal{V}_{\Phi}=(\text{Id}+\Phi^{-1}\mathcal{B})^{-1}$ is firmly nonexpansive (Proposition 23.7 in Reference \cite{bauschke2011convex}). It then follows from Remark 4.24(iii)] in Reference \cite{bauschke2011convex} that $\mathcal{V}_{\Phi}$ is $\frac{1}{2}$-averaged, and in turn, singled-valued. Consequently \eqref{eq_iteration1} can further be written as \eqref{eq_iteration}.

Let $\omega^*=(\beta^*,\psi^*,\sigma^*,z^*,\lambda^*)$ be the steady state of dynamics \eqref{eq_compact}, hence is a fixed point of \eqref{eq_iteration}. By continuity, the following equivalences hold, $\omega^*=(\mathrm{Id}+\Phi^{-1}\mathcal{B})^{-1}\circ(\mathrm{Id}-\Phi^{-1}\mathcal{A})(\omega^*)\Leftrightarrow (\mathrm{Id}-\Phi^{-1}\mathcal{A})(\omega^*)\in (\mathrm{Id}+\Phi^{-1}\mathcal{B})(\omega^*)\Leftrightarrow 0\in (\mathcal{A}+\mathcal{B})(\omega^*)$, thus, it follows,
\begin{equation}\label{eq_zero}
\begin{aligned}
0&\in \Norcone_{\Omega}(\beta^*)+\hat{F}(\beta^*,\sigma^*)+\bar{A}^\top\lambda^*\\
0&=L_{\sigma}\sigma^* \\
0&=\kappa(\sigma^*-\beta^*)+L_{\sigma}\psi^*\\
0&=L_{\lambda}\lambda^* \\
0&\in \Norcone_{\mathbb{R}_{+}^{NM}}(\lambda^*)+L_{\lambda}\lambda^*-\bar{A}\beta^*+L_{\lambda}z^*-\bar{d}
\end{aligned}
\end{equation}
From the second and forth equality, we have $\sigma^*=\gamma_1\mathbbb{1}_N$ for some $\gamma_1\in\mathbb{R}$ and $\lambda^*=\mathbbb{1}_N\otimes\gamma_2$ for some $\gamma_2\in\mathbb{R}^M$. Then, the third equality becomes $0 = \gamma_1\mathbbb{1}_N-\beta^*+\kappa^{-1}L_{\sigma}\psi^*$. Left-multiplying both sides of the above equality by $\mathbbb{1}^{\top}_N$ gives $\gamma_1=\frac{1}{N}\mathbbb{1}^{\top}_N \beta^*$, which means that $\hat{F}(\beta^*,\sigma^*)=F(\beta^*)$. The first inclusion is subsequently written as 
\begin{equation}
  0 \in F(\beta^*)+\Norcone_{\Omega}(\beta^*)+\tilde A^\top \gamma_2.  
\end{equation}
We left-multiply both sides of the last equality in \eqref{eq_zero} by $\mathbbb{1}^{\top}_N\otimes I_M$ yields
\begin{equation}
  0 \in \Norcone_{\mathbb{R}_+^M}(\gamma_2)-\tilde A \beta^* -d.  
\end{equation}
Overall, the above two inclusions are exactly the KKT conditions of the  VI($K$,$F$) \cite{facchinei2003finite}, hence $\beta^*$ is the v-GNE of the game $G$.
\end{proof}

\begin{proof}[Proof of Lemma~\ref{lemma_tilde_A}]
The mapping $\tilde{\mathcal{A}}$ is $\tilde\epsilon$-cocoercive if and only if
\small
\begin{equation*}
\left\langle \begin{bmatrix}
\beta-\beta'\\
\sigma-\sigma'
\end{bmatrix},\tilde{\mathcal{A}}(\beta,\sigma)-\tilde{\mathcal{A}}(\beta',\sigma')\right\rangle \geq \tilde\epsilon\left\|\tilde{\mathcal{A}}(\beta,\sigma)-\tilde{\mathcal{A}}(\beta',\sigma')\right\|^2
\end{equation*}
\normalsize
for all $\beta,\beta'\in \Omega$, $\sigma,\sigma'\in \mathbb{R}^N$ and some $\tilde \epsilon>0$. By using the definition of $\hat F $ and \eqref{eq_g}, the above inequality is equivalent to
\begin{equation}\label{eq_F_sigma}
  \langle y,Ry\rangle \geq \tilde\epsilon\|Ry\|^2,
\end{equation}
where $y:=\col\left(\begin{bmatrix}
\beta_n-\beta'_n\\
\sigma_n-\sigma'_n
\end{bmatrix}\right)_{n\in\mathcal{N}}$, $R:=\blkdiag(R_n)_{n\in\mathcal{N}}$, $R_n:=\begin{bmatrix}
\mu_n & \ell_n\\
-\kappa & \kappa
\end{bmatrix}$. 

Note that the left hand side of \eqref{eq_F_sigma} is equal to $\langle y,Ry\rangle=\frac{1}{2}\langle y,(R+R^\top)y\rangle$, and $R+R^\top=\blkdiag\left(R_n+R_n^\top\right)_{n\in\mathcal{N}}$,
$$
R_n+R_n^\top=\begin{bmatrix}
2\mu_n & \ell_n-\kappa\\
\ell_n-\kappa & 2\kappa,
\end{bmatrix}
$$
 where $\mu_n=2a_n\frac{N-1}{N}+\frac{1}{\alpha N}$ and $\ell_n=-2a_n\frac{N-1}{N}+\frac{N-2}{\alpha N}$. The above matrix is positive definite if and only if $\kappa>0$ and $4\mu_n\kappa-(\ell_n-\kappa)^2>0$, or equivalently
\begin{equation} \label{eq_k1}
\begin{aligned}
    \kappa \in \big((\sqrt{\mu_n}-\sqrt{\mu_n+\ell_n})^2,(\sqrt{\mu_n}+\sqrt{\mu_n+\ell_n})^2\big).
\end{aligned}
\end{equation}
Note that if $\kappa$ satisfies the condition in \eqref{eq_k}, then \eqref{eq_k1} holds for each $n\in\mathcal{N}$.

For the right hand side of \eqref{eq_F_sigma}, we have $\|Ry\|^2=\langle y,R^{\top}Ry\rangle$, where $R^\top R=\blkdiag\left(R^\top_n R_n\right)_{n\in\mathcal{N}}$ with
$$
R_n^\top R_n=\begin{bmatrix}
\mu^2_n+\kappa^2 & \mu_n\ell_n-\kappa^2\\
\mu_n\ell_n-\kappa^2 & \ell^2_n+\kappa^2
\end{bmatrix}.
$$

Note that $\lambda_{\min}(R_n+R_n^\top)=\bar\epsilon_n$ and $\lambda_{\max}(R_n^\top R_n)=\underline\epsilon_n/2$, and since $R$ is a block diagonal matrix, we have
$$
\langle y,Ry\rangle \geq\frac{1}{2} \min_{n\in\mathcal{N}}\bar\epsilon_n \|y\|^2, \ \|Ry\|^2  \leq \frac{1}{2}\max_{n\in\mathcal{N}}\underline \epsilon_n \|y\|^2.
$$
We, therefore, obtain the following inequality
$$\langle y,R y \rangle \geq \frac{\min_{n\in\mathcal{N}}\bar\epsilon_n}{\max_{n\in\mathcal{N}}\underline \epsilon_n}\|R{y}\|^2.$$
We conclude from the above relation that the mapping $\tilde{\mathcal{A}}$ is $\tilde\epsilon$-cocoercive with 
\[
\tilde\epsilon:=\frac{\min_{n\in\mathcal{N}}\bar\epsilon_n}{\max_{n\in\mathcal{N}}\underline \epsilon_n}.
\]
\end{proof}

\begin{proof}[Proof of Lemma \ref{lemma_positive2}]
We prove this lemma in three steps.

At the first step, we show that the mapping $\mathcal{A}$ is $\epsilon$-cocoercive, where $\epsilon=\min\{\tilde\epsilon,1/\lambda_{\max}(L)\}$ and $\tilde\epsilon$ is given by Lemma \ref{lemma_tilde_A}.
Since the matrix $L_{\lambda}$ is symmetric, $L_{\lambda}\lambda$ is the gradient of $\tilde f(\lambda):=\frac{1}{2}\lambda^\top L_{\lambda} \lambda$, which 
is convex since $\nabla^2 \tilde f(\lambda)=L_{\lambda}$ is positive semidefinite. By Baillon-Haddad theorem (Corollary 18.16 in \cite{bauschke2011convex}), for any $\lambda,\lambda'$, we have
\begin{equation*}
    (L_{\lambda}(\lambda-\lambda'))^\top(\lambda-\lambda')\geq\frac{1}{\lambda_{\max}(L_{\lambda})}\|L_{\lambda}(\lambda-\lambda')\|^2
\end{equation*}
Combining the above inequality and Lemma \ref{lemma_tilde_A}, we have that for any $\omega,\omega'\in\mathbb{R}^{N(3+2M)}$, 
\begin{equation*}
\begin{aligned}
    & \left\langle  \omega-\omega',\mathcal{A}(\omega)-\mathcal{A}(\omega')\right \rangle = (\lambda-\lambda)^\top L_{\lambda}(\lambda-\lambda') \\
     + &(\tilde{\mathcal{A}}(x,\sigma)-\tilde{\mathcal{A}}(x',\sigma'))^\top[x-x',\sigma-\sigma']^\top \\
    \geq  & \tilde \epsilon \|\tilde{\mathcal{A}}(x,\sigma)-\tilde{\mathcal{A}}(x',\sigma')\|^2 + 1/\lambda_{\max}(L)\|L_{\lambda}(\lambda-\lambda')\|^2 \\
    \geq  & \epsilon \|\omega-\omega')\|^2.
\end{aligned}
\end{equation*}
Consequently, we obtain that the mapping $\mathcal{A}$ is $\epsilon$-cocoercieve. 

At the second step, we show that the mapping $\Phi^{-1}\mathcal{A}$ is $\xi$-cocoercive, $\mathcal{U}_{\Phi}$ is $\frac{1}{2\xi}$-averaged in the $\Phi$-induced norm, where $\xi=\frac{\epsilon}{\lambda_{\max}(\Phi^{-1})}$. The mapping $\Phi^{-1}\mathcal{A}$ is $\xi$-cocoercive in the $\Phi$-induced norm if and only if
\begin{equation}
\begin{aligned}
&\langle\Phi^{-1}\mathcal{A}(\omega)-\Phi^{-1}\mathcal{A}(\omega'),\omega-\omega'\rangle_\Phi \\ \geq &\xi\|\Phi^{-1}\mathcal{A}(\omega)-\Phi^{-1}\mathcal{A}(\omega')\|^2_\Phi.
\end{aligned}
\end{equation}
We first provide an upper bound for the right hand side of the above inequality:
\begin{equation}
\begin{aligned}
&\|\Phi^{-1}\mathcal{A}(\omega)-\Phi^{-1}\mathcal{A}(\omega')\|^2_\Phi\\
=&\langle\Phi\Phi^{-1}\mathcal{A}(\omega)-\Phi^{-1}\mathcal{A}(\omega'),\Phi^{-1}\mathcal{A}(\omega)-\Phi^{-1}\mathcal{A}(\omega')\rangle\\
=&(\mathcal{A}(\omega)-\mathcal{A}(\omega'))^\top \Phi^{-1} (\mathcal{A}(\omega)-\mathcal{A}(\omega')) \\
\leq &\|\Phi^{-1}\| \|\mathcal{A}(\omega)-\mathcal{A}(\omega')\|^2\\
=&\lambda_{\max}(\Phi^{-1}) \|\mathcal{A}(\omega)-\mathcal{A}(\omega')\|^2,
\end{aligned}
\end{equation}
where $\lambda_{\max}(\Phi^{-1})$ is largest eigenvalue of $\Phi^{-1}$. Then, we use the $\epsilon$-cocoerciveness of $\mathcal{A}$ and the above derived upper bound to define a cocoercivity constant $\xi$:
\begin{align*}
&\langle\Phi^{-1}\mathcal{A}(\omega)-\Phi^{-1}\mathcal{A}(\omega'),\omega-\omega'\rangle_\Phi\\=&\langle \mathcal{A}(\omega)-\mathcal{A}(\omega'),\omega-\omega'\rangle\\
\geq& \epsilon \|\mathcal{A}(\omega)-\mathcal{A}(\omega')\|^2 \\
\geq&\frac{\epsilon}{\lambda_{max}(\Phi^{-1})}\|\Phi^{-1}\mathcal{A}(\omega)-\Phi^{-1}\mathcal{A}(\omega')\|^2_\Phi.
\end{align*}
Hence, the mapping $\Phi^{-1}\mathcal{A}$ is cocoercive under the $\Phi$-induced norm and $\mathcal{U}_{\Phi}$ is $\frac{1}{2\xi}$-averaged  \cite[Proposition 4.33]{bauschke2011convex}. 

At the last step, we show the iteration $\mathcal{V}_{\Phi}\circ\mathcal{U}_{\Phi}$ is $\theta$-averaged, with $\theta=\frac{1}{2-1/(2\xi)}$, since $\mathcal{U}_{\Phi}$ is $\frac{1}{2\xi}$-averaged and $\mathcal{V}_{\Phi}$ is $\frac{1}{2}$-averaged \cite[Proposition 2.4]{combettes2015compositions}. To ensure $\theta\in(0,1)$, we must have $\xi>\frac{1}{2}$, or equivalently 
$\lambda_{\min}(\Phi)>\frac{1}{2\epsilon}$. The latter holds if and only if the matrix $\Phi-\frac{1}{2\epsilon} I$ is positive definite, i.e., the matrix 
\footnotesize
$$\begin{bmatrix}
\tau^{-1}-\frac{1}{2\epsilon} I & 0 & 0 & 0 & -\bar{A}^\top\\
0 & \upsilon^{-1}-\frac{1}{2\epsilon} I & L_{\sigma} & 0 & 0\\
0 & L_{\sigma} & \rho^{-1}-\frac{1}{2\epsilon} I & 0 & 0\\
0 & 0 & 0 & \delta^{-1}-\frac{1}{2\epsilon} I & L_{\lambda}\\
-\bar{A} & 0 & 0 & L_{\lambda} & \eta^{-1}-\frac{1}{2\epsilon} I 
\end{bmatrix}$$
\normalsize
is positive definite.

Based on the Schur complement argument, the above matrix is positive definite if and only if
\begin{equation}\label{eq_schur_complement1}
\tau^{-1}\succ \frac{1}{2\epsilon} I, \upsilon^{-1} \succ  \frac{1}{2\epsilon} I, \delta^{-1}\succ  \frac{1}{2\epsilon} I,
\end{equation}

\begin{equation}\label{eq_schur_complement2}
(\rho^{-1}-\frac{1}{2\epsilon} I)-L_{\sigma}(\upsilon^{-1}-\frac{1}{2\epsilon} I)^{-1} L_{\sigma}\succ 0,
\end{equation}
\footnotesize
\begin{equation}\label{eq_schur_complement3}
(\eta^{-1}-\frac{1}{2\epsilon}I) -(\bar{A}(\tau^{-1}-\frac{1}{2\epsilon} I)^{-1} \bar{A}^{\top}+L_{\lambda}(\delta^{-1}-\frac{1}{2\epsilon} I)^{-1} L_{\lambda})\succ 0.
\end{equation}
\normalsize
The first relation \eqref{eq_schur_complement1} holds as $\tau_n$, $\upsilon_n$  and $\epsilon_n$ satisfy \eqref{eq_stepsize1} for all $n\in\mathcal{N}$. To verify the second inequality \eqref{eq_schur_complement2}, we note that
\begin{equation}
\begin{aligned}
    L_{\sigma}(\upsilon^{-1}-\frac{1}{2\epsilon} I)^{-1} L_{\sigma}  & \prec (\frac{1}{\max_{n\in\mathcal{N}}\upsilon_n}-\frac{1}{2\epsilon})^{-1}  L^2_{\sigma} \\
    & \prec (\frac{1}{\max_{n\in\mathcal{N}}\upsilon_n}-\frac{1}{2\epsilon})^{-1}  \lambda_{\max}^2(L) I.
\end{aligned}    
\end{equation}
The above together with \eqref{eq_stepsize2} imply that the inequality \eqref{eq_schur_complement2} holds. Finally, to verify \eqref{eq_schur_complement3}, we note that
\begin{equation}
\begin{aligned}
   &\bar{A}(\tau^{-1}-\frac{1}{2\epsilon} I)^{-1} \bar{A}^{\top}+L_{\lambda}(\delta^{-1}-\frac{1}{2\epsilon} I)^{-1} L_{\lambda}   \prec\\
    & (\frac{1}{\max_{n\in\mathcal{N}}\tau_n}-\frac{1}{2\epsilon}) \bar{A} \bar{A}^{\top}+ (\frac{1}{\max_{n\in\mathcal{N}}\delta_n}-\frac{1}{2\epsilon}) L^2_{\lambda}  \prec  \\
    &  (\frac{1}{\max_{n\in\mathcal{N}}\tau_n}-\frac{1}{2\epsilon}) \|\bar A\|^2I+ (\frac{1}{\max_{n\in\mathcal{N}}\delta_n}-\frac{1}{2\epsilon})\lambda_{\max}^2(L) I.
\end{aligned}    
\end{equation}
The above together with \eqref{eq_stepsize3} imply that \eqref{eq_schur_complement3} holds. 
The inequalties \eqref{eq_schur_complement1}, \eqref{eq_schur_complement2}, \eqref{eq_schur_complement3} combined result in $\Phi-\frac{1}{2\epsilon}I$ is positive definite, which completes the proof.
\end{proof}

\end{document}